\definecolor{mygray}{gray}{0.4}
\definecolor{light-blue}{rgb}{0.8,0.85,1}
\newcommand{\mcU}{\mathcal{U}}
\newcommand{\mcO}{\mathcal{O}}
\newcommand{\mcL}{\mathcal{L}}
\newcommand{\mcN}{\mathcal{N}}
\newtheorem{theorem}{Theorem}
\newtheorem{definition}{Definition}
\newtheorem{corollary}{Corollary}
\newtheorem{example}{Example}
\newtheorem{lemma}{Lemma}
\newcommand{\mcG}{\mathcal{G}}
\newcommand{\ie}{i.e.}
\newcommand{\gqc}{GQC}
\newcommand{\sgqc}{SGQC}
\newcommand{\dgqc}{$\delta\text{GQC}$}
\newcommand{\gf}{GF}
\newcommand{\tp}{TP}
\newcommand{\hp}{HP}
\newcommand{\Imi}{\imath}
\renewcommand{\eref}[1]{\Eref{#1}}
\newcommand{\gs}{GS}
\newcommand{\cptp}{CPTP}
\newcommand{\text}[1]{\mathrm{#1}}
\newcommand{\unam}{Universidad Nacional Aut\'onoma de M\'exico, Ciudad de M\'exico 01000, M\'exico}
\newcommand{\ifunam}{Instituto de F\'{\i}sica, \unam}
\newcommand{\regens}{Institut f\"ur Theoretische Physik, Universit\"at Regensburg, D-93040 Regensburg, Germany}
\begin{document}
\title[]{Position representation of single-mode Gaussian channels beyond the Gaussian functional form}

\author{David Davalos$^1$, Camilo Moreno$^2$, Juan-Diego Urbina$^2$, Carlos Pineda$^1$} 
\address{$^1$ \ifunam}
\address{$^2$ \regens}
\ead{davidphysdavalos@gmail.com}

\begin{abstract} 
We study one-mode Gaussian quantum channels in continuous-variable systems by performing a
\textit{black-box} characterization using complete positivity and trace
preserving conditions, and report the existence of two subsets that do not have a functional 
Gaussian form. Our study covers as particular limit the case of singular channels, thus connecting our results
with their known classification scheme based on canonical forms. Our full characterization of Gaussian channels without Gaussian functional form is completed by showing how
Gaussian states are transformed under these operations, and by deriving the conditions for the existence 
of master equations for the non-singular cases. We show that although every functional form can be found in the vicinity of the identity, one of them does not parametrize unitary channels.
\end{abstract} 
\pacs{03.65.Yz, 03.65.Ta, 05.45.Mt}
 
\maketitle
\section{Introduction} 
Within the theory of continuous-variable quantum systems (a central topic of study given their role in the description of physical systems like the electromagnetic
field~\cite{doi:10.1142/p489}, solids and nano-mechanical
systems~\cite{RevModPhys.86.1391} and atomic
ensembles~\cite{RevModPhys.82.1041}) the simplest states, both from a theoretical an experimental point of view, 
are the so-called Gaussian states. An operation that transforms such
family of states into itself is called a Gaussian quantum channel (\gqc{}).
Even though Gaussian states and
channels form small subsets among general states and channels, they
have proven to be useful in a variate of tasks such as 
quantum communication~\cite{Grosshans2003}, quantum computation~\cite{PhysRevLett.82.1784}
and the study 
of quantum entanglement in simple~\cite{RevModPhys.77.513} and complicated
scenarios~\cite{PhysRevA.98.022335}.

Writing Gaussian channels in the \textit{position state representation} is
often of theoretical convenience, for instance for the calculation of position
correlation functions. Thus, an interesting way to proceed is to characterize
the possible functional forms of \gqc{} in such representation. First attempts
in this direction were given in Ref.~\cite{PazPhysRevLett}, but their
\textit{ansatz} is limited to only Gaussian functional forms (denoted simply by
\textit{Gaussian forms} or \gf{}). Going beyond such restrictive assumption, in
the present work we characterize another two possible forms that can arise
directly from the definition of Gaussian channel in the one-mode case. We thus
give a complete characterization of \gqc{} in position state representation,
and study the special case of \textit{singular Gaussian quantum channels}
(\sgqc{}), \ie{} the operations for which the inverse operation does not exist.
There are works that study similar representation problems.
For example, in Ref.~\cite{Moshinsky1971} the authors studied the position
state representation of Gaussian unitaries, suitable for close dynamics.
We study Gaussian channels, which include Gaussian unitaries. However, our
approach uses the quantum channel formalism (in the
position state representation), which includes open system dynamics and 
allows for singular forms in contrast to Ref.~\cite{Moshinsky1971}. 
In the same context of quantum channels, in Ref.~\cite{Ivan2011} the authors
studied the well known operator sum representation of Holevo's canonical forms,
which characterize the action of one-mode Gaussian channels upto Gaussian
unitaries~\cite{Holevo2007}. On the other hand, our characterization goes over
the full parametric space of one-mode Gaussian quantum channels, using the
position state representation.

One surprising result of the present study is that key properties of Gaussian
channels that contain delta-like factors {\it can not} be reproduced by their
smoothed form. This is particularly clear when considering binary properties
like unitarity, as we will explicitly show.

The paper is organized as follows. In section~\ref{sec:gqc} we discuss the
definition of \gqc{} and introduce functional forms beyond the \gf{} that emerge from
singularities in the coefficients that define a \gqc{} with \gf{}.
In section~\ref{sec:ptp}
we give a
\textit{black-box} characterization of such channels, using complete positivity and trace preserving conditions. In section~\ref{sec:singular_forms} we study functional forms that lead to \sgqc{} and derive their explicit form. Finally in section~\ref{sec:master_equations} we
derive conditions of existence of master equations and their explicit forms. We conclude in section~\ref{sec:conclusions}.
\section{Gaussian quantum channels}  
\label{sec:gqc}

Gaussian states are characterized completely
by first (mean) and second (correlations) moments encoded in the \textit{mean vector} $\vec d$
 and the \textit{covariance matrix} $\sigma$. Therefore, a Gaussian state $S$
can be denoted as $S=S\left(\sigma,\vec d\right)$, where for the one-mode case we have 
$$\sigma=\left( \begin{array}{cc}
\langle \hat q^2 \rangle-\langle \hat q \rangle^2 & \frac{1}{2}\langle \hat q \hat p+\hat p \hat q\rangle -\langle \hat q \rangle \langle \hat p \rangle \\ 
\frac{1}{2}\langle \hat q \hat p+\hat p \hat q \rangle -\langle \hat q \rangle \langle \hat p \rangle  & \langle \hat p^2 \rangle -\langle \hat p \rangle^2
\end{array} \right),$$ 
and $$\vec d=\left(\langle \hat q \rangle, \langle \hat p \rangle
\right)^\text{T}$$ with $\hat q$ and $\hat p$ denoting the standard position and momentum (quadrature) operators~\cite{Cerf}.

To start with, we recall the following definition~\cite{Reviewquantuminfo}:
\begin{definition}[Gaussian quantum channels]
A quantum channel is Gaussian (\gqc{}) if it transforms Gaussian
states into Gaussian states.
\label{def:gqc}
\end{definition}
This definition is strictly equivalent to the statement that any \gqc{}, say $\mcG$, can be written as
\begin{equation}
\mcG[\rho]=\tr_\text{E} \left[ U \left(\rho \otimes \rho_\text{E} \right) U^{\dagger} \right]
\label{def:def_2}
\end{equation}
where $U$ is a unitary transformation, acting on a combined global state obtained from enlarging the system with an  environment $\text{E}$,
that is generated by a quadratic bosonic Hamiltonian (\ie{} $U$ is a
\textit{Gaussian unitary})~\cite{Reviewquantuminfo}. The environmental initial state $\rho_E$ is a
Gaussian state and the trace is taken over the environmental
degrees of freedom.

Following definition~\ref{def:gqc}, a GQC is fully characterized by its action over Gaussian states, and this action is in turn defined by \textit{affine
transformations}~\cite{Reviewquantuminfo}. Specifically, $\mcG=\mcG\left( \mathbf{T}, \mathbf{N},
\vec \tau \right)$ is
given by a tuple $\left( \mathbf{T}, \mathbf{N},
\vec \tau \right)$ where $\mathbf{T}$ and $\mathbf{N}$ are $2\times 2$ real
matrices with $\mathbf{N}=\mathbf{N}^\text{T}$~\cite{Reviewquantuminfo} acting on Gaussian states according to  $\mcG\left( \mathbf{T},
\mathbf{N}, \vec \tau
\right)\left[S\left(\sigma,\vec d\right)\right]=S\left( \mathbf{T}\sigma
\mathbf{T}^{\text{T}} +\mathbf{N}, \mathbf{T} \vec d +\vec \tau \right)$. In
the particular case of closed systems we have $\mathbf{N}=\mathbf{0}$ and $\mathbf{T}$ is a
symplectic matrix,
which corresponds to linear canonical transformations. Their
unitary representations in the position state basis, together with 
their composition rules,  was already studied in
Ref.~\cite{Moshinsky1971}. The authors showed that Gaussian unitaries have
always a complex exponential form, while the superoperator
representation of Gaussian unitaries (\ie{} $\rho\mapsto U \rho U^\dagger{}$)
does not (see Lemma~\ref{thm:lemma}).

Let us note that although channels with Gaussian form trivially transform
Gaussian states into Gaussian states, the definition
goes beyond \gf{}.
 Introducing \textit{difference} and \textit{sum} coordinates with the notation used in Ref.~\cite{Gert}, $x=q_2-q_1$ and
$r=(q_1+q_2)/2$, such that $\rho(x,r)=\left.\left\langle
r-\frac{x}{2} \right. \right| \left. \hat \rho \left|  r+\frac{x}{2} \right.
\right\rangle$, a quantum channel 
\begin{equation}
\rho_f\left(x_f,r_f\right)
   =\int_{\mathbb{R}^2} dx_i dr_i J(x_f,x_i;r_f,r_i)\rho_i\left(r_i,x_i\right),
\label{eq:propagacion}
\end{equation}
maps an initial $\hat \rho_i$ into a final $\hat \rho_f$ state linearly through
the kernel $J(x_f,x_i;r_f,r_i)$. In order to see how a channel without \gf{}
can be constructed as a limiting case of a quantum channel with \gf{}, consider
the general parametrization of the latter as given in ~\cite{PazSupplementary}
\begin{eqnarray} \fl 
J_\text{G}(x_f,x_i;r_f,r_i)
   =\frac{b_3}{2 \pi} \exp\Big[\Imi\Big( 
     b_1 x_f r_f  +b_2x_f r_i  +b_3x_ir_f \nonumber &  +b_4x_ir_i +c_1x_f+c_2x_i \Big) 
 \\    
 &   -a_1 x_f^2-a_2x_fx_i-a_3x_i^2 \Big],
\label{eq:gf}
\end{eqnarray}
where all coefficients are real and no quadratic terms in $r_{i,f}$ are
allowed due to the hermiticity and trace preserving conditions. Now it is easy to see that if the coefficients of the quadratic form in the exponent of $J_\text{G}$ in \eref{eq:gf} depend on a parameter $\epsilon$ such that for $\epsilon \to 0$ they scale as $a_n \propto \epsilon^{-1}$ and $b_n \propto \epsilon^{-1/2}$, then 
\begin{equation}
\lim_{\epsilon \to 0}
J_\text{G}(x_f,x_i;r_f,r_i) = \mcN \delta(\alpha x_f-\beta x_i) e^{\Sigma'(x_f,x_i;r_f,r_i)},
\label{eq:typeII}
\end{equation}
where $\alpha$, $\beta \in \mathbb{R}$ and $\Sigma'(x_f,x_i;r_f,r_i)$ is a quadratic form that now admits quadratic terms in $r_{i,f}$. This is the first example of a \dgqc{}, namely a Gaussian quantum channel that contains Dirac-delta functions in its coordinate representation. This particular example is not only of academic interest. Physically, it can
be implemented by means of the ubiquitous quantum Brownian motion (QBM) model for
harmonic systems (damped harmonic oscillator)~\cite{Gert}. In such system
\dgqc{} occur at isolated points of time, defined in the limit of the
\textit{antisymmetric
position autocorrelation function} tending to zero.

Since the form of \eref{eq:typeII} admits quadratic terms in $r_{i,f}$, it
suggest that a form with an additional delta can exist and can be defined with  the usual limit of the Dirac delta using a Gaussian function. In order to avoid working with such limits, in this work we provide a \textit{black-box} characterization of general \gqc{}s without Gaussian form. In
particular we study channels that can arise when singularities on the
coefficients of Gaussian forms \gf{} occur, that lead immediately to singular Gaussian operations. We
characterize which forms in \dgqc{} lead to valid quantum channels, and under
which conditions singular operations lead to valid \textit{singular quantum
channels} (\sgqc{}). We will show that only two possible forms of \dgqc{} hold according to \textit{trace preserving} (\tp{}) and \textit{hermiticity preserving} (\hp{}) conditions. The channel of~\eref{eq:typeII} is one of these forms, as expected. Later on we will impose \textit{complete positivity} in order to have valid \gqc{}, \ie{} \textit{completely positive and trace preserving} (\cptp{}) Gaussian operations, going beyond previous characterizations of \gqc{} in position state representation~\cite{PazSupplementary}.

\section{Complete Positive and trace-preserving $\delta-$Gaussian operations}  
\label{sec:ptp}     
Let us introduce the ans\"atze for the possible forms of \gqc{} in the
position representation, to perform the black-box characterization. Following
\eref{def:def_2} and taking the continuous variable representation of
difference and sum coordinates, the trace becomes an integral over position
variables of the environment. Then we end up with a Fourier transform of a
multivariate Gaussian, having for one mode the
following structures: a Gaussian form \eref{eq:gf}, a Gaussian form multiplied
with one-dimensional delta or a  Gaussian form multiplied by a two-dimensional
delta. Thus, in order to start with the black-box characterization, we
shall propose the following general Gaussian operations with one and two
deltas, respectively
\begin{eqnarray}
J_\text{I}(x_f,r_f;x_i,r_i)&=
  \mcN_\text{I} \delta(\vec \alpha^{\text{T}}
     \vec v_f-\vec \beta^{\text{T}} \vec v_i)e^{\Sigma(x_f,x_i;r_f,r_i)}, 
 \label{eq:deltaop1} \\
J_\text{II}(x_f,r_f;x_i,r_i)&=\mcN_\text{II} \delta (\mathbf{A}\vec{v}_f-\mathbf{B} \vec v_i)e^{\Sigma(x_f,x_i;r_f,r_i)}, 
\label{eq:deltaop2}
\end{eqnarray} 
where the exponent reads
\begin{eqnarray}
\fl
\Sigma(x_f,x_i;r_f,r_i)=\Imi &\Big( b_1 x_f r_f
 +b_2  x_f r_i
 +b_3x_ir_f\nonumber 
+b_4x_ir_i+c_1x_f
+c_2x_i \Big)\nonumber\\
 &-a_1 x_f^2-a_2x_fx_i-a_3x_i^2 -e_1r_f^2-e_2r_fr_i-e_3r_i^2 -d_1r_f-d_2r_i,
\label{eq:quadratic_form}
\end{eqnarray} 
with 
$\vec{v}_{i(f)}=(r_{i(f)},x_{i(f)})$. 
$\mathbf{A}$ and $\mathbf{B}$
are $2\times 2$ matrices, $\vec \alpha$ and $\vec \beta$ are two-dimensional
vectors, and $\mcN_\text{I,II}$ are normalization constants. 
They provide, together with \eref{eq:gf} all possible ans\"atze for \gqc{}.
Note that the coefficients in the exponential of every form must be finite,
otherwise the functional form can be modified.

Let us study now \cptp{} conditions, since \textit{complete positivity} implies
\textit{positivity} and in turn it implies \textit{hermiticity preserving}
(\hp{}). For sum and difference coordinates \hp{} reads
\begin{equation}
J(-x_f,r_f;-x_i,r_i)=J(x_f,r_f;x_i,r_i)^{*}.
\label{eq:orale}
\end{equation}
Following this equation, it is easy to note that the coefficients $a_n$, $b_n$,
$c_n$, $e_n$ and $d_n$ must be real, as well the entries of matrices (and
vectors) $\mathbf{A}$, $\mathbf{B}$, $\vec \alpha$, $\vec \beta$. The 
delta function in \eref{eq:deltaop1} is reduced to  i) $\delta(\alpha x_f-\beta x_i) $ 
or 
ii) $\delta(\alpha r_f-\beta r_i) $ when fulfilling condition (\ref{eq:orale}). For the case of \eref{eq:deltaop2},
the two-dimensional delta is reduced to iii) $
\delta(\gamma r_f-\eta r_i)\delta(\alpha x_f-\beta x_i)$.
Let us now analyze the trace preserving condition (\tp{}), which for continuous variable systems reads
\begin{equation}
\int_{\mathbb{R}} dr_fJ(x_f=0,r_f;x_i,r_i)=\delta (x_i).
\label{eq:tp}
\end{equation} 
This condition immediately discards ii) from the above combinations of deltas,
thus we end up with cases i) and iii). For case i) \tp{} reads
\begin{equation}
\mcN_\text{I}\int dr_f\delta(-\beta x_i)e^{\Sigma}
    =\frac{\mcN_\text{I}}{|\beta|}\sqrt{\frac{\pi}{e_1}}\delta(x_i)
	e^{\left(\frac{e_2^2}{4e_1}-e_3\right)r_i^2},
\end{equation}
thus the relation between the coefficients assumes the form
\begin{equation}
\frac{e_2^2}{4e_1}-e_3=0, d_1=0,d_2=0,
\label{eq:trace_preserving_form1}
\end{equation}
and the normalization constant $\mcN_\text{I}=|\beta |\sqrt{\frac{e_1}{\pi}}$ with
$\beta\neq 0$ and $e_1>0$.  For case iii) the trace-preserving condition reads	
\begin{eqnarray*} \fl
\mcN_\text{II}
\int dr_f \delta(\gamma r_f-\eta r_i)\delta(-\beta x_i)e^{\Sigma}
	=\frac{\mcN_\text{II}}{|\beta \gamma|}\delta(x_i) e^{-\left(e_1(\frac{\eta}{\gamma})^2+e_2\frac{\eta}{\gamma}+e_3\right)r_i^2-\left(d_1\frac{\eta}{\gamma}+d_2\right) r_i}.
\end{eqnarray*}
Thus, the following relation between $e_n$ and $d_n$ coefficients must be fulfilled:
\begin{equation}
e_1\Bigg(\frac{\eta}{\gamma}\Bigg)^2+e_2\frac{\eta}{\gamma}+e_3=0, \ \ d_1\frac{\eta}{\gamma}+d_2=0,
\label{eq:restricII}
\end{equation}
with $\gamma,\beta \neq 0$ and $\mcN_\text{II}=|\beta \gamma|$. In the particular case of $\eta=0$, \eref{eq:restricII} is reduced to $e_3=d_2=0$. 
As expected from the analysis of limits above, we showed that \dgqc{}s admit quadratic terms in $r_{i,j}$. 

Up to this point we have \textit{hermitian and trace preserving Gaussian
operations}; to derive the remaining CPTP conditions, it is useful to write
Wigner's function and Wigner's characteristic function, which we now derive. 
The representation of the Wigner's characteristic function reads
\begin{equation}
\chi(\vec k)=\exp\left[ -\frac{1}{2}\vec k^{\text{T}}\left( \Omega \sigma \Omega^{\text{T}}\right)\vec k-\Imi\left( \Omega \langle \hat x \rangle \right)^{\text{T}} \vec k \right]
\label{eq:cha}
\end{equation}
and its relation with Wigner's function
\begin{eqnarray} 
\fl
W(\mathbf{x})&=\frac{1}{(2\pi)^2}\int_{\mathbb{R}^{2}} d\vec k e^{-\Imi \vec{x}^{\text{T}}\Omega \vec k}\chi\left(\vec k\right)
=\frac{1}{2\pi}\int_{\mathbb{R}} dx e^{\Imi p x} \left.\left\langle  r-\frac{x}{2} \right. \right| \left. \hat \rho \left|  r+\frac{x}{2} \right. \right\rangle.
\label{eq:wigners}
\end{eqnarray}
where $\vec k=\left(k_1,k_2 \right)^{\text{T}}$, $\vec
x=\left(r,p\right)^{\text{T}}$ and $\hbar=1$ (we are using natural units).
Using the previous equations to construct Wigner and Wigner's
characteristic functions of the initial and final states, and substituting them
in \eref{eq:propagacion}, it is straightforward to get the propagator in the Wigner's
characteristic function representation
\begin{eqnarray} 
\tilde J \left( \vec k_f,\vec k_i \right)=\int_{\mathbb{R}^6} d\Gamma K\left(\vec l\right) J(\vec v_f,\vec v_i),
\label{eq:changeofrepresentation}
\end{eqnarray} 
where the transformation kernel reads
$$K\left(\vec{l}\right)= \frac{1}{(2\pi)^3}e^{  \left[ \Imi \left( k_2^f r_f-k_1^f p_f -k_2^i r_i+k_1^i p_i -p_i x_i +p_f x_f \right)  \right]  },$$ 
with 
\begin{eqnarray*} 
d\Gamma&=d p_f d p_i dx_f dx_i dr_f dr_i\text{ and }\\
\vec l&=\left(p_f, p_i, x_f, x_i, r_f,r_i \right)^{\text{T}}.
\end{eqnarray*} 
By elementary integration of ~\eref{eq:changeofrepresentation} one can
show that for both cases
\begin{equation}
\tilde J_{\text{I,III}}\left( \vec k_f, \vec k_i \right)= \delta \left ( k_1^i -\frac{\alpha}{\beta} k_1^f  \right) \delta \left( k_2^i -\vec\phi^{\text{T}}_{\text{I,III}} \vec k_f  \right) e^{P_\text{I,III}(\vec k_f)},
\label{eq:formcharI}
\end{equation}
where $P_{\text{I,III}}(\vec k_f)=\sum_{i,j=1}^2 P^{(\text{I,III})}_{ij} k^f_i
k^f_j+\sum_{i=1}^2 P^{(\text{I,III})}_{0i}k^f_i$ with
$P^{(\text{I,III})}_{ij}=P^{(\text{I,III})}_{ji}$.
For case i) we obtain
\begin{eqnarray} 
P^\text{(I)}_{11}&=-\left(\left(\frac{\alpha}{\beta}\right)^2\left(a_3+\frac{b_3^2}{4 e_1}\right)+\frac{\alpha}{\beta} \left(a_2+\frac{1}{2} \frac{b_1 b_3}{e_1}\right)+a_1+\frac{b_1^2}{4 e_1}\right),\nonumber\\
P^\text{(I)}_{12}&=-\left(\frac{\alpha}{\beta}\frac{b_3}{2 e_1}+\frac{b_1}{2 e_1}\right),\nonumber\\
P^\text{(I)}_{22}&=-\frac{1}{4 e_1}.
\label{eq:deltaI_Ps}
\end{eqnarray} 
For case iii) we have
\begin{eqnarray} 
P^\text{(III)}_{11}&= -\left(\left(\frac{\alpha}{\beta}\right)^2a_3 +\frac{\alpha}{\beta} a_2 +a_1\right),\nonumber\\
P^\text{(III)}_{12}&=P^\text{(III)}_{22}=0,
\label{eq:deltaII_Ps}
\end{eqnarray} 
And for both cases we have $P^{(\text{I,III})}_{01}=\Imi \left( \frac{\alpha}{\beta}c_2 +c_1\right)$ and $P^{(\text{I,III})}_{02}=0$.
Vectors $\vec \phi$ are given by
\begin{eqnarray} 
\vec\phi_{\text{I}}&=\left(\frac{\alpha}{\beta}\left( b_4-\frac{b_3 e_2}{2 e_1}\right)-\frac{b_1 e_2}{2 e_1}+b_2,-\frac{e_2}{2 e_1}\right)^\text{T},\nonumber\\
\vec\phi_{\text{III}}&=\left( \frac{\alpha}{\beta}\frac{\eta}{\gamma}b_3+\frac{\alpha}{\beta}b_4 +\frac{\eta}{\gamma}b_1 +b_2,\frac{\eta}{\gamma} \right)^\text{T}.
\label{eq:phis}
\end{eqnarray} 

We are now in position to write explicitly the conditions for complete positivity. 
Having a Gaussian operation
characterized by
$\left( \mathbf{T}, \mathbf{N}, \vec \tau \right)$, 
the CP condition can be expressed in terms of the matrix
\begin{equation}
\mathbf{C}=\mathbf{N}+\Imi\Omega -\Imi\mathbf{T}\Omega \mathbf{T}^{\text{T}},
\label{eq:ccp}
\end{equation}
where $\Omega=\left( \begin{array}{cc}
0 & 1 \\ 
-1 & 0
\end{array}  \right)$ is the symplectic matrix. An operation $\mcG\left(
\mathbf{T}, \mathbf{N}, \vec \tau \right)$ is CP if and only if $\mathbf{C}\geq
0$~\cite{cptp,Reviewquantuminfo}. 
Applying the propagator on a test characteristic function, \eref{eq:cha}, it is
easy to compute the corresponding tuples. For both cases we get $\left(
\mathbf{T}_\text{I,III},\mathbf{N}_\text{I,III}, \vec \tau_\text{I,III}
\right)$:
\begin{eqnarray} 
\mathbf{N}_\text{I,III}&=2\left( \begin{array}{cc}
-P_{22} & P_{12} \\ 
P_{12} & -P_{11}
\end{array}  \right),\nonumber\\
\vec \tau_{\text{I,III}}&=\left(0,\Imi{} P^{(\text{I,III})}_{01} \right)^\text{T},
\label{eq:nsandtaus}
\end{eqnarray} 
while for case i) matrix $\mathbf{T}$ is given by
\begin{equation}
\mathbf{T}_\text{I}=\left(
\begin{array}{cc}
 \frac{e_2}{2 e_1} & 0 \\
 \vec \phi_\text{I,1} & -\frac{\alpha }{\beta } \\
\end{array}
\right),
\label{eq:tupleI}
\end{equation}
where $\vec \phi_\text{I,1}$ denotes the first component of vector $\vec \phi_\text{I}$, see \eref{eq:phis}.
The complete positive condition is given by the inequalities raised from the eigenvalues of matrix~\eref{eq:ccp}
\begin{eqnarray} 
\fl
\pm&\frac{\sqrt{\alpha ^2 e_2^2+4 \alpha  \beta  e_2 e_1+4 \beta ^2 e_1^2 \left(4 {P_{12}^{(\text{I})}}^2+\left(P_{11}^{(\text{I})}-P_{22}^{(\text{I})}\right)^2+1\right)}}{2 \beta  e_1}
&-\left(P_{11}^{(\text{I})}+P_{22}^{(\text{I})}\right) \geq 0.
\label{eq:CPI}
\end{eqnarray} 
For case iii) matrix $\mathbf{T}$ is
\begin{equation}
\mathbf{T}_\text{III} = \left(
\begin{array}{cc}
 -\frac{\eta }{\gamma } & 0 \\
 \vec \phi_\text{III,1}  & -\frac{\alpha }{\beta } \\
\end{array}
\right),
\label{eq:tupleII}
\end{equation}
and complete positivity conditions read
\begin{equation}
\pm \frac{\sqrt{(\beta  \gamma -\alpha  \eta )^2+\beta ^2 \gamma ^2 {P_{11}^{(\text{III})}}^2}}{\beta  \gamma }-P_{11}^{(\text{III})}\geq 0.
\label{eq:cpII}
\end{equation}
Note that in both cases complete positivity conditions do not depend on $\vec \phi$.
These results will be discussed in the next section for the singular case.
\section{Allowed singular forms}  
\label{sec:singular_forms}
There are two classes of Gaussian singular
channels. Since the inverse of a Gaussian channel $\mcG\left(
\mathbf{T},\mathbf{N}, \vec \tau \right)$ is $\mcG\left( \mathbf{T}^{-1},
-\mathbf{T}^{-1}\mathbf{N} \mathbf{T}^{-T}, -\mathbf{T}^{-1} \vec \tau
\right)$, its existence rests on the invertibility of $\mathbf{T}$. Thus,
studying the rank of the latter we are able to explore singular forms. 
We are going to use the classification of one-mode channels developed by 
Holevo~\cite{Holevo2007}.

For singular channels there are two classes
characterized by its \textit{canonical form}~\cite{Holevo2008}, \ie{} any 
channel can be obtained by applying Gaussian unitaries before and
after the canonical form. The class called ``$A_1$'' correspond to singular
channels with $\text{Rank} \left(\mathbf{T}\right)=0$ and coincide with the
family of \textit{total depolarizing channels}. The class ``$A_2$'' is
characterized by $\text{Rank}\left(\mathbf{T}\right)=1$. Both classes are
entanglement-breaking~\cite{Holevo2008}.

Before analysing the functional forms constructed in this work, let us study
channels with \gf{}. The tuple of the affine transformation, corresponding to
the propagator $J_\text{G}$, \eref{eq:gf}, were introduced in
Ref.~\cite{PazSupplementary} up to some typos. Our calculation
for this tuple is
\begin{eqnarray} 
\mathbf{T}_\text{G}&=\left(
\begin{array}{cc}
 -\frac{b_4}{b_3} & \frac{1}{b_3} \\
 \frac{b_1 b_4}{b_3}-b_2 & -\frac{b_1}{b_3} \\
\end{array}
\right),\nonumber \\
\mathbf{N}_\text{G}&=\left(
\begin{array}{cc}
 \frac{2 a_3}{b_3^2} & \frac{a_2}{b_3}-\frac{2 a_3 b_1}{b_3^2} \\
 \frac{a_2}{b_3}-\frac{2 a_3 b_1}{b_3^2} & -2 \left(-\frac{a_3 b_1^2}{b_3^2}+\frac{a_2 b_1}{b_3}-a_1\right) \\
\end{array}
\right),\nonumber \\
\vec \tau_\text{G}&= \left( -\frac{c_2}{b_3},\frac{b_1 c_2}{b_3}-c_1 \right)^\text{T}.
\label{eq:singular_tupleII}
\end{eqnarray} 
It is straightforward to check that for $b_2=0$, $\mathbf{T}_\text{G}$ is
singular with $\text{Rank}\left(\mathbf{T}_\text{G}\right)=1$, \ie{} it belongs
to class $A_2$. Due to the full support of Gaussian functions, it was
surprising that Gaussian channels with \gf{} have singular limit. In this case
the singular behavior arises from the lack of a Fourier factor for $x_f r_i$.
This is the only singular case for \gf{}.

It is instructive at this point to analyze the physical meaning of
the transition from non-singular to singular Gaussian channels in the
context of QBM~\cite{Gert}. There, a central harmonic
oscillator is linearly coupled through its position with a bath of harmonic oscillators
initially at thermal equilibrium. The coupling of the central system
to the bath causes dissipation of energy and decoherence in the position basis.
At any time $t$ the quantum channel of QBM has the Gaussian form given by
\eref{eq:gf}, with time-depending coefficients.  Excluded here are particular
times with divergent coefficients that need to be treated differently, as
already mentioned in section~\ref{sec:ptp}. 

For a specific form of the bath spectral density all these coefficients are
calculated in Ref.~\cite{Gert}, and in particular the coefficient $b_2$ is found to
be proportional to $e^{-\gamma t/2}\sin(\eta t)$, where $\gamma$ is the damping
parameter characterizing the interaction strength and $\eta$ is a coefficient
related with the natural frequency of the central oscillator. It is clear that for any finite $\gamma$,
$b_2 \to 0$  when $t \to \infty$, and then for long times the matrix
$\mathbf{T}_\text{G}$ becomes singular. It is precisely at $t \to \infty$ when
the particle reaches the state of equilibrium in the QBM model (the reduced
density matrix of the particle reaches a stationary state). Moreover, it can be
shown that the quantum channel in the QBM model acquires the form $J_\text{I}$,
with $b_2=b_1=c_1=c_2=0$, which actually defines a singular channel, as shown
above. Finally, it is easy to see that the determinant of the
$\mathbf{T}_\text{G}$ matrix in QBM is proportional to $\eta^ 2e^ {-\gamma t}$,
and thus $\mathbf{T}_\text{G}$ becomes singular only when $t \to \infty$. This
shows the deep relation between dynamical irreversibility and singular channels
in this context.

Now we analyze functional forms derived in~\sref{sec:ptp}. The complete
positivity conditions of the form $\tilde J_\text{III}$, presented in
\eref{eq:cpII}, have no solution for $\alpha\to 0$ and/or $\gamma \to 0$, thus
this form cannot lead to singular channels. This is not the case for $\tilde
J_\text{I}$, \eref{eq:formcharI}, which leads to singular operations belonging
to class $A_2$ for $\alpha e_2=0$,  and to class $A_1$ for $e_2 =\alpha=b_2= 0$.
For the latter, the complete
positivity conditions read 
\begin{equation}
e_1\leq a_1.
\label{eq:reducedCP}
\end{equation}

By using an initial state characterized by $\sigma_i$ and $\vec d_i$ we
can compute the explicit dependence of the final states on the initial
parameters. For channels belonging to class $A_2$ [see
\eref{eq:singular_tupleII} with $b_2=0$ and \eref{eq:tupleI} with $e_2
\alpha=0$] the final state only depends in one combination of the
components of $\sigma_i$, and in one combination of the components of $\vec
d_i$, \ie{} $ \sum_{mn} l_{mn} \left(\sigma_i \right)_{mn} $ and $\sum_m n_m
\left( \vec d_i \right)_m$, respectively, where $l_{mn}$ and $n_m$ depend on
the channel parameters. See the appendix for the explicit formulas
and~\fref{fig:1} (right panel) for an schematic description of the final states. From such
combinations it is obvious that we cannot solve for the initial state
parameters given a final state as expected; this is because the parametric
space dimension is reduced from $5$ to $2$. The channel belonging to $A_1$ [see
\eref{eq:tupleI} with $e_2 =\alpha=b_2= 0$ and \eref{eq:reducedCP}] maps every
initial state to a single one characterized by $\sigma_f=\mathbf{N}$ and $\vec d_f=
\left(0,-c_1 \right)^\text{T}$, see \fref{fig:1}  (left panel) for a schematic description.

According to our ans\"atze [see eqs.~(\ref{eq:deltaop1}) and
(\ref{eq:deltaop2})], we conclude that one-mode \sgqc{} can only have the functional forms
given in \eref{eq:gf} and \eref{eq:deltaop1}.
This is one of the central results of our work and can be stated as:
\begin{theorem}[One-mode singular Gaussian channels]
A one-mode Gaussian quantum channel is singular if and only if it
has one of the following functional forms in the position space representation
\begin{enumerate}
\item $\frac{b_3}{2 \pi} \exp\Big[ \Imi{}\Big( b_1 x_f r_f+b_3x_ir_f 
+b_4x_ir_i+c_1x_f+c_2x_i \Big) -a_1 x_f^2-a_2x_fx_i-a_3x_i^2 \Big],$
\item $|\beta|\sqrt{e_1/\pi}\delta(\alpha x_f-\beta x_i) 
\exp\Big[-a_2 x_f x_i-a_1 x_f^2-a_3 x_i^2$\\
$+\Imi{} \Big(b_2 x_f r_i +b_3 r_f x_i+b_1 r_f x_f+b_4 r_i x_i+c_1 x_f+c_2 x_i\Big)$\\
$-e_1 r_f^2 -e_2 r_f  
               r_i-\frac{e_2^2 r_i^2}{4 e_1}\Big]$, with $e_2 \alpha=0$.
\end{enumerate}
\label{thm:teorema}
\end{theorem}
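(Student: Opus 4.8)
The plan is to lean on the structural fact established in \sref{sec:ptp}: after \hp{} and \tp{} are imposed, every one-mode \gqc{} written in the position representation belongs to one of three families, namely the Gaussian form $J_\text{G}$ of \eref{eq:gf}, the one-delta form $J_\text{I}$ of \eref{eq:deltaop1} with the delta reduced to case~i) [$\delta(\alpha x_f-\beta x_i)$, together with \eref{eq:trace_preserving_form1}], and the two-delta form coming from $J_\text{II}$ of \eref{eq:deltaop2} with the delta reduced to case~iii) [$\delta(\gamma r_f-\eta r_i)\delta(\alpha x_f-\beta x_i)$, together with \eref{eq:restricII}]. Since $\mcG(\mathbf{T},\mathbf{N},\vec\tau)$ is invertible exactly when $\mathbf{T}$ is, the entire theorem reduces to inspecting $\det\mathbf{T}$ in each family and discarding any family in which $\det\mathbf{T}=0$ turns out to be incompatible with complete positivity.

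For the Gaussian form I would read off from the tuple \eref{eq:singular_tupleII} that $\det\mathbf{T}_\text{G}=b_2/b_3$; since $b_3\neq0$ is required for $J_\text{G}$ to be an admissible kernel, $\mathbf{T}_\text{G}$ is singular iff $b_2=0$, in which case its two rows are proportional and $\text{Rank}(\mathbf{T}_\text{G})=1$, \ie{} the channel lies in Holevo's class $A_2$; putting $b_2=0$ in \eref{eq:gf} is exactly item~(i). For the one-delta form, the tuple \eref{eq:tupleI} gives $\det\mathbf{T}_\text{I}=-\alpha e_2/(2\beta e_1)$, and since \tp{} already forced $\beta\neq0$ and $e_1>0$, this vanishes iff $\alpha e_2=0$; imposing in addition the \tp{} relations $e_3=e_2^2/(4e_1)$, $d_1=d_2=0$ of \eref{eq:trace_preserving_form1} collapses the quadratic form \eref{eq:quadratic_form} precisely to the exponent in item~(ii), carrying the constraint $e_2\alpha=0$. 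Within this family the further choice $b_2=\alpha=e_2=0$ makes $\mathbf{T}_\text{I}$ vanish identically (class $A_1$, with CP condition \eref{eq:reducedCP}), which is still a special instance of item~(ii).

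The step that calls for genuine care is eliminating the two-delta family. From \eref{eq:tupleII} one has $\det\mathbf{T}_\text{III}=\alpha\eta/(\beta\gamma)$ with $\beta,\gamma\neq0$, so singularity would require $\alpha\eta=0$; substituting $\alpha\eta=0$ into the CP inequalities \eref{eq:cpII}, and using $P^{(\text{III})}_{12}=P^{(\text{III})}_{22}=0$ from \eref{eq:deltaII_Ps}, the radical reduces to $|\beta\gamma|\sqrt{1+(P^{(\text{III})}_{11})^2}$ and the tighter of the two inequalities becomes $-\sqrt{1+(P^{(\text{III})}_{11})^2}\ge P^{(\text{III})}_{11}$, which after squaring reads $0\ge1$: a contradiction. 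Hence no singular \cptp{} channel has the form $J_\text{III}$, and the three-way case split leaves only items (i) and (ii). The converse is then immediate: a kernel of the form (i) has $\det\mathbf{T}_\text{G}=b_2/b_3=0$, and a kernel of the form (ii) with $e_2\alpha=0$ has $\det\mathbf{T}_\text{I}=-\alpha e_2/(2\beta e_1)=0$, so each is singular, and both subfamilies are non-vacuous as genuine \cptp{} maps because the corresponding CP conditions [\eref{eq:CPI}, and the $b_2=0$ specialization of the Gaussian-form CP conditions] admit solutions.

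I expect the real obstacle to be not this finite bookkeeping but the exhaustiveness input imported from \sref{sec:ptp}: one must be confident that Fourier-transforming the quadratic environmental integral implicit in \eref{def:def_2} cannot produce any position-space kernel outside $\{J_\text{G},J_\text{I},J_\text{II}\}$ — in particular that the degenerate limits in which environmental Gaussian factors sharpen into Dirac deltas are all accounted for, and that no extra singular branch hides in the boundary $b_3\to0$ of the Gaussian form. Granting that classification, what remains is exactly the determinant-and-CP analysis sketched above.
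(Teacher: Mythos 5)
Your proposal is correct and follows essentially the same route as the paper: it takes the exhaustive three-form classification (Gaussian form, $J_\text{I}$ with $\delta(\alpha x_f-\beta x_i)$, $J_\text{II}$ with the two-dimensional delta) established by the \hp{}/\tp{} analysis of section~\ref{sec:ptp}, reduces singularity to $\det\mathbf{T}=0$ via the tuples \eref{eq:singular_tupleII}, \eref{eq:tupleI} and \eref{eq:tupleII}, and eliminates the two-delta family by showing that \eref{eq:cpII} is violated when $\alpha\eta=0$ — exactly the paper's argument (the paper additionally offers the concatenation table~\ref{tab:concats} as an independent check). Your closing caveat about the exhaustiveness of the three ans\"atze correctly identifies the one input the paper itself asserts rather than proves in detail.
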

\begin{corollary}[Singular classes]
A one-mode singular Gaussian channel belongs to class $A_1$ if and only if its position representation has the following form:
$$\sqrt{e_1/\pi}\delta(x_i) \exp\Big[-a_1 x_f^2+\Imi{} \Big(b_2 x_f r_i 
         +b_1 r_f x_f+c_1 x_f\Big)-e_1 r_f^2\Big].$$
Otherwise the channel belongs to class $A_2$.
\label{thm:corollario}
\end{corollary}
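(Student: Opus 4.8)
The corollary will follow from Theorem~\ref{thm:teorema} together with the dictionary recalled in \sref{sec:singular_forms} between Holevo's canonical classes and the rank of the affine matrix $\mathbf{T}$: a one-mode singular \gqc{} lies in $A_1$ exactly when $\mathbf{T}=\mathbf{0}$ (the total-depolarizing case) and in $A_2$ exactly when $\text{Rank}(\mathbf{T})=1$. (Theorem~\ref{thm:teorema} itself is the output of the \hp{}, \tp{} and complete-positivity analysis of sections~\ref{sec:ptp} and~\ref{sec:singular_forms}, which leaves only these two forms as singular possibilities.) So the plan is simply to evaluate $\mathbf{T}$ for each of the two forms and decide when it vanishes.

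For form (i), the Gaussian form \eref{eq:gf} with $b_2=0$, I would substitute $b_2=0$ into $\mathbf{T}_\text{G}$ of \eref{eq:singular_tupleII}. Its $(1,2)$ entry equals $1/b_3$, which is nonzero because $b_3$ is also the overall normalization $b_3/(2\pi)$; hence $\mathbf{T}_\text{G}\neq\mathbf{0}$, and since the channel is singular this forces $\text{Rank}(\mathbf{T}_\text{G})=1$. So every channel of form (i) belongs to $A_2$.

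For form (ii), the kernel $J_\text{I}$ of \eref{eq:deltaop1} subject to the trace-preserving relations and $e_2\alpha=0$, I would use $\mathbf{T}_\text{I}$ of \eref{eq:tupleI}, together with $\beta\neq0$ and $e_1>0$ (from \sref{sec:ptp}) and $\vec{\phi}_{\text{I}}$ from \eref{eq:phis}. If $\alpha\neq0$, the constraint $e_2\alpha=0$ gives $e_2=0$, so the first column of $\mathbf{T}_\text{I}$ vanishes while the $(2,2)$ entry $-\alpha/\beta$ does not: $\text{Rank}(\mathbf{T}_\text{I})=1$ and the channel is in $A_2$. If $\alpha=0$, the second column vanishes and the first column has entries $e_2/(2e_1)$ and $\phi_{\text{I},1}=b_2-b_1 e_2/(2e_1)$ (evaluating \eref{eq:phis} at $\alpha=0$), so $\mathbf{T}_\text{I}=\mathbf{0}$ if and only if $e_2=0$ and $b_2=0$, while $\text{Rank}(\mathbf{T}_\text{I})=1$ otherwise. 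Collecting the cases, a channel of form (ii) lies in $A_1$ exactly when $\alpha=e_2=b_2=0$, and in $A_2$ in every other case.

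It then remains to put the $A_1$ case in explicit form and to read off the converse. Setting $\alpha=e_2=b_2=0$ in form (ii), the delta becomes $\delta(-\beta x_i)=|\beta|^{-1}\delta(x_i)$, so the prefactor collapses to $\sqrt{e_1/\pi}$, and $\delta(x_i)$ annihilates every monomial of the exponent that contains $x_i$ (the $a_2,a_3,b_3,b_4,c_2$ terms, in addition to the already-absent $e_2,e_3$ terms), leaving $-a_1 x_f^2+\Imi(b_1 r_f x_f+c_1 x_f)-e_1 r_f^2$, which is the displayed kernel (the written $b_2 x_f r_i$ term being identically zero there). Conversely, any kernel of that form is a specialization of form (ii) with $\alpha=e_2=b_2=0$, hence has $\mathbf{T}=\mathbf{0}$ and lies in $A_1$; and any remaining singular \gqc{}, namely form (i) or form (ii) with $\alpha e_2\neq0$ or $b_2\neq0$, has a rank-one $\mathbf{T}$ and therefore lies in $A_2$. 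This proves the corollary. I do not expect a real obstacle here; the only points that need care are keeping the \tp{} constraints $\beta\neq0$, $e_1>0$, $e_3=e_2^2/(4e_1)$ (and the residual complete-positivity bound $e_1\le a_1$ of \eref{eq:reducedCP}) throughout, and using that $\text{Rank}(\mathbf{T})$ is invariant under the Gaussian-unitary dressing defining the canonical forms (it only multiplies $\mathbf{T}$ by symplectic matrices), so that it genuinely determines the class.
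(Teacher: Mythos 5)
Your proposal is correct and is essentially the paper's own argument: the paper likewise establishes the corollary by computing $\mathbf{T}_\text{G}$ and $\mathbf{T}_\text{I}$ for the two singular forms of Theorem~\ref{thm:teorema} and reading off $\text{Rank}(\mathbf{T})=0$ (class $A_1$, which forces $\alpha=e_2=b_2=0$, whence the displayed kernel) versus $\text{Rank}(\mathbf{T})=1$ (class $A_2$). One harmless slip: when $\alpha\neq 0$, setting $e_2=0$ only annihilates the $(1,1)$ entry of $\mathbf{T}_\text{I}$, not its whole first column, which is $\left(0,\tfrac{\alpha}{\beta}b_4+b_2\right)^\text{T}$; but since the first row is then zero while the $(2,2)$ entry $-\alpha/\beta$ is not, the rank-one conclusion is unaffected.
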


As a particular example of Theorem~\ref{thm:teorema}, choosing the case (ii) with $e_2=2 e_1=(2 \bar{n}+1)^{-1}$, $\alpha=0=b_1=b_2=0$, $a_1=(2\bar{n}+1)/2$ one gets the Holevo's canonical form of class $A_2$, see Ref.~\cite{Reviewquantuminfo}.
 The parameter $\bar{n}$ is the average of the number of excitation in the mode. Notice that the canonical form cannot be obtained using the Gaussian functional form, while elements in $A_2$ enjoy both (i) and (ii) forms of Theorem~\ref{thm:teorema} (see Table~\ref{tab:concats}). Additionally the operator sum representation of this canonical form is given in Ref.~\cite{Ivan2011}.

Since channels on each class are connected each other by unitary
conjugations~\cite{Holevo2007}, a consequence of the theorem and the subsequent
corollary is that the set of allowed forms must remain invariant under unitary
conjugations. 
To show this we must know the possible functional forms of Gaussian unitaries. They are given by following lemma for one mode
\begin{lemma}[One-mode Gaussian unitaries]
Gaussian unitaries have only \gf{} or the one given by \eref{eq:deltaop2}.
\label{thm:lemma}
\end{lemma}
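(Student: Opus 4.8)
The plan is to intersect the exhaustive list of position-representation ans\"atze for one-mode \gqc{}s obtained in \sref{sec:ptp} with the algebraic characterization of those \gqc{}s that are unitary conjugations. For the latter I would use the fact that a Gaussian channel $\mcG\left(\mathbf{T},\mathbf{N},\vec\tau\right)$ equals the map $\rho\mapsto U\rho U^{\dagger}$ for some Gaussian unitary $U$ if and only if $\mathbf{N}=\mathbf{0}$. One direction is immediate, since conjugating a Gaussian state by a metaplectic operator composed with a Weyl displacement sends $\sigma\mapsto\mathbf{T}\sigma\mathbf{T}^{\text{T}}$ with no additive term; for the converse, putting $\mathbf{N}=\mathbf{0}$ in \eref{eq:ccp} and using the elementary $2\times2$ identity $\mathbf{T}\Omega\mathbf{T}^{\text{T}}=\det(\mathbf{T})\,\Omega$ gives $\mathbf{C}=\Imi\left(1-\det\mathbf{T}\right)\Omega$, whose eigenvalues are $\pm\left(1-\det\mathbf{T}\right)$, so the complete-positivity requirement $\mathbf{C}\ge 0$ forces $\det\mathbf{T}=1$; a symplectic $\mathbf{T}$ together with the displacement $\vec\tau$ is then realized by a Gaussian unitary through the standard correspondence between symplectic transformations and metaplectic unitaries~\cite{Reviewquantuminfo}.

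Next I would impose $\mathbf{N}=\mathbf{0}$ on the three and only three functional forms that survive \hp{} and \tp{} according to \sref{sec:ptp}: \gf{} of \eref{eq:gf}, the single-delta form $J_\text{I}$ of \eref{eq:deltaop1} (reduced case i), and the double-delta form $J_\text{II}$ of \eref{eq:deltaop2} (reduced case iii). For \gf{} the constraint is satisfiable, e.g.\ with $a_1=a_2=a_3=0$, for which \eref{eq:singular_tupleII} gives $\mathbf{N}_\text{G}=\mathbf{0}$ while $\det\mathbf{T}_\text{G}=b_2/b_3$ can be tuned to $1$. For case iii of $J_\text{II}$ it is likewise satisfiable: by \eref{eq:nsandtaus} together with \eref{eq:deltaII_Ps}, $\mathbf{N}_\text{III}=\mathbf{0}$ reduces to the single scalar relation $\left(\alpha/\beta\right)^2a_3+\left(\alpha/\beta\right)a_2+a_1=0$, which is compatible with the complete-positivity inequality \eref{eq:cpII}. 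The decisive case is $J_\text{I}$: combining \eref{eq:nsandtaus} with \eref{eq:deltaI_Ps}, the $(1,1)$ entry of $\mathbf{N}_\text{I}$ equals $-2P_{22}^{(\text{I})}=1/(2e_1)$, and the trace-preserving analysis already forced $e_1$ to be a finite, strictly positive number [see \eref{eq:trace_preserving_form1} and the normalization $\mcN_\text{I}=|\beta|\sqrt{e_1/\pi}$]; hence $\mathbf{N}_\text{I}\neq\mathbf{0}$ identically, so no channel of type $J_\text{I}$ can be a unitary conjugation. This yields the lemma: every one-mode Gaussian unitary has either \gf{} or the form \eref{eq:deltaop2}.

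Finally, to confirm that the second alternative genuinely occurs I would record the explicit Gaussian unitary $U=e^{\Imi\lambda\hat q^2}$, whose position matrix elements $\bra{q}U\ket{q'}$ are proportional to $\delta(q-q')$ and whose induced superoperator kernel is $J(x_f,r_f;x_i,r_i)=e^{-2\Imi\lambda\, x_f r_f}\,\delta(x_f-x_i)\,\delta(r_f-r_i)$, manifestly of the two-delta form \eref{eq:deltaop2} and not expressible as \gf{}, since \eref{eq:gf} has full support on $\mathbb{R}^2$. I expect the only delicate point to be the ``if'' direction of the equivalence in the first step, i.e.\ that $\mathbf{N}=\mathbf{0}$ (which, via complete positivity, already entails $\det\mathbf{T}=1$) yields a genuine unitary and not merely a symplectic map on phase space; this is handled by the metaplectic representation, under which every $\mathbf{T}\in\mathrm{Sp}(2,\mathbb{R})$ lifts to a unitary unique up to an irrelevant global phase, so that $\rho\mapsto U\rho U^{\dagger}$ is well defined. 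Everything else is routine bookkeeping with the tuples and coefficients already computed in \sref{sec:ptp}.
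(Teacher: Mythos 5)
Your proof is correct and follows essentially the same route as the paper's: identify unitary conjugations by $\mathbf{N}=\mathbf{0}$ together with symplectic $\mathbf{T}$, rule out $J_\text{I}$ because \eref{eq:deltaI_Ps} and \eref{eq:nsandtaus} force $\left(\mathbf{N}_\text{I}\right)_{11}=1/(2e_1)\neq 0$ for the finite $e_1>0$ required by trace preservation, and then check that \gf{} and $J_\text{II}$ do admit $\mathbf{N}=\mathbf{0}$ with symplectic $\mathbf{T}$ ($b_2=b_3$, respectively $\alpha\eta=\beta\gamma$). Your additional observations --- that complete positivity plus $\mathbf{N}=\mathbf{0}$ already forces $\det\mathbf{T}=1$ via $\mathbf{C}=\Imi\left(1-\det\mathbf{T}\right)\Omega\geq 0$, and the explicit witness $U=e^{\Imi\lambda\hat q^2}$ of the two-delta form --- are correct refinements but do not change the argument.
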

\begin{proof}
Recalling that for a unitary \gqc{},  $\mathbf{T}$  must be symplectic
($\mathbf{T}\Omega \mathbf{T}^\text{T}=\Omega$) and $\mathbf{N}=\mathbf{0}$. However,
an inspection to \eref{eq:deltaI_Ps} lead us to note that
$\mathbf{N}\neq\mathbf{0}$ unless $e_1$ diverges. Thus, Gaussian unitaries
cannot have the form $J_\text{I}$ [see~\eref{eq:deltaop1}]. An inspection of
matrices $\mathbf{T}$ and $\mathbf{N}$ of \gqc{} with \gf{} [see
\eref{eq:singular_tupleII}] and the ones for $J_\text{II}$ [see equations
(\ref{eq:deltaII_Ps}) and (\ref{eq:tupleII})] lead us to note the following 
two observations: (i) in
both cases we have
$\mathbf{N}=0$ for $a_n=0 \ \ \forall n$; (ii) the matrix $\mathbf{T}$ is
symplectic for \gf{} when $b_2=b_3$, and when $\alpha  \eta =\beta  \gamma $ for $J_\text{II}$. In particular the identity map has the last form. This completes
the proof.
\end{proof}

\newcommand{\deltaAa}{$\delta_{\text{A}_2}^\alpha$}
\newcommand{\deltaAe}{$\delta_{\text{A}_2}^{e_2}$}
\newcommand{\deltaAae}{$\delta_{\text{A}_2}^{\alpha,e_2}$}
\newcommand{\deltaA}{$\delta_{\text{A}_1}$}
\newcommand{\gaussA}{$\mcG_{\text{A}_2}$}
\newcommand{\deltaU}{$\delta_{\mcU}$}
\newcommand{\gaussU}{$\mcG_{\mcU}$}
One can now compute the concatenations of the \sgqc{}s with Gaussian unitaries. This can be done straightforward using the well known formulas for Gaussian integrals and the Fourier transform of the Dirac delta.
Given that the calculation is elementary, and for sake of brevity, we present
only the resulting forms of each concatenation. To show this compactly we
introduce the following abbreviations: Singular channels belonging to class
$A_2$ with form $J_\text{I}$ and with $\alpha=0$, $e_2=0$ and $\alpha=e_2=0$,
will be denoted as \deltaAa{}, \deltaAe{} and \deltaAae{}, respectively;
singular channels belonging to the same class but with \gf{} will be denoted as
\gaussA{}; channels belonging to class $A_1$ will be denoted as \deltaA{};
finally Gaussian unitaries with \gf{} will be denoted as \gaussU{} and the ones
with form $J_\text{II}$ as \deltaU{}. Writing the concatenation of two channels
in the position representation as
\begin{eqnarray} \fl 
\qquad
\qquad
J^{(\text{f})}(x_f,r_f;x_i,r_i)=
\int_{\mathbb{R}^2} dx' dr' J^\text{(1)}\left(x_f, r_f; x',r' \right)J^{(2)}\left( x',r';x_i, r_i\right),
\label{eq:concat}
\end{eqnarray} 
the resulting functional forms for $J^{(\text{f})}$ are given in 
table~\ref{tab:concats}. As expected, the table shows that the integral has
only the forms stated by our theorem. Additionally it shows the cases when
unitaries change the functional form of class $A_2$, while for class $A_1$,
$J^{(\text{f})}$ has always the unique form enunciated by the corollary. This table also constitutes another proof for our theorem and the corollary.

The central results of this section are the following. The three functional
forms existing for one-mode Gaussian channels in position state representation
allow specific types of channels. 
We proved that only channels with \gf{} or $J_\text{II}$ can be
unitary, whereas channels with \gf{} or $J_\text{I}$ can be singular.
To show the relevance of this result, we introduce the following example
in which the
transition from one functional form to another determines the change from
unitarity to non-unitarity.  
\begin{example}[Unitary transition]
Let
$\delta(r_f- r_i \eta/\gamma)\delta(x_f \gamma /\eta -x_i) \exp(\Sigma)$
be a family of unitary channels with functional form $J_\text{II}$,
fulfilling eqs.~(\ref{eq:restricII}) and
(\ref{eq:cpII}) with $a_n=0$, $\forall n$. Substituting
$\delta(r_f-\eta/\gamma) \to 
\exp[-(r_f-r_i \eta/\gamma)^2/\epsilon^2]/(\epsilon \sqrt{\pi})$
with $\epsilon>0$, we recover the mentioned family of channels in the limit of
$\epsilon\to 0$. However, for finite $\epsilon$ we have a family of valid channels
with form $J_\text{I}$, see \eref{eq:deltaop1}. It is easy to check that
eqs.~(\ref{eq:trace_preserving_form1}) and (\ref{eq:CPI}) are fulfilled, the
latter for $\gamma\neq 0$ and $\beta \neq 0$. Since the form $J_\text{I}$
cannot describe unitary channels, the transition $J_\text{I}\leftrightarrow
J_\text{II}$ coincides with the transition non-unitary/unitary.
\end{example}
This example shows how the regularization of delta like factors in quantum channels leads to lose key properties, such as unitarity. This motivates investigating if other quantum
information properties are lost when performing regularizations.

\begin{table}[h] 
\centering
{\large
\begin{tabular}{|c|c|c|}
\hline 
$J^{(1)}$ & $J^{(2)}$ & $J^{(\text{f})}$ \\ 
\hline 
\deltaAa{} & \gaussU{} & \gaussA{} \\ 
\hline 
\gaussU{} & \deltaAa{} & \deltaAa{} \\ 
\hline 
\deltaAa{} & \deltaU{} & \deltaAa{} \\ 
\hline 
\deltaU & \deltaAa{} & \deltaAa{} \\ 
\hline 
\deltaAe{} & \gaussU{} & \deltaAe{} \\ 
\hline 
\gaussU{} & \deltaAe{} & \gaussA{} \\ 
\hline 
\deltaAe{} & \deltaU{} & \deltaAe{} \\ 
\hline 
\deltaU{} & \deltaAe{} & \deltaAe{} \\ 
\hline 
\gaussU{} & \deltaAae{} & \deltaAae{} \\ 
\hline 
\deltaAae{} & \gaussU{} & \gaussA{} \\ 
\hline
\deltaU{} & \deltaAae{} & \deltaAae{} \\ 
\hline 
\deltaAae{} & \deltaU{} & \deltaAae{} \\ 
\hline
\deltaU{},\gaussU{} & \deltaA{} & \deltaA{} \\ 
\hline 
\deltaA{} & \deltaU{},\gaussU{} & \deltaA{} \\ 
\hline 
\end{tabular}
}
\caption{The first and second columns show the functional forms of $J^{(1)}$ and $J^{(2)}$, respectively. The last column shows the resulting form of the concatenation of them [see~\eref{eq:concat}]. See main text for symbol coding. \label{tab:concats}}
\end{table}  

%

\begin{figure} 
\centering
\includegraphics[width=\columnwidth]{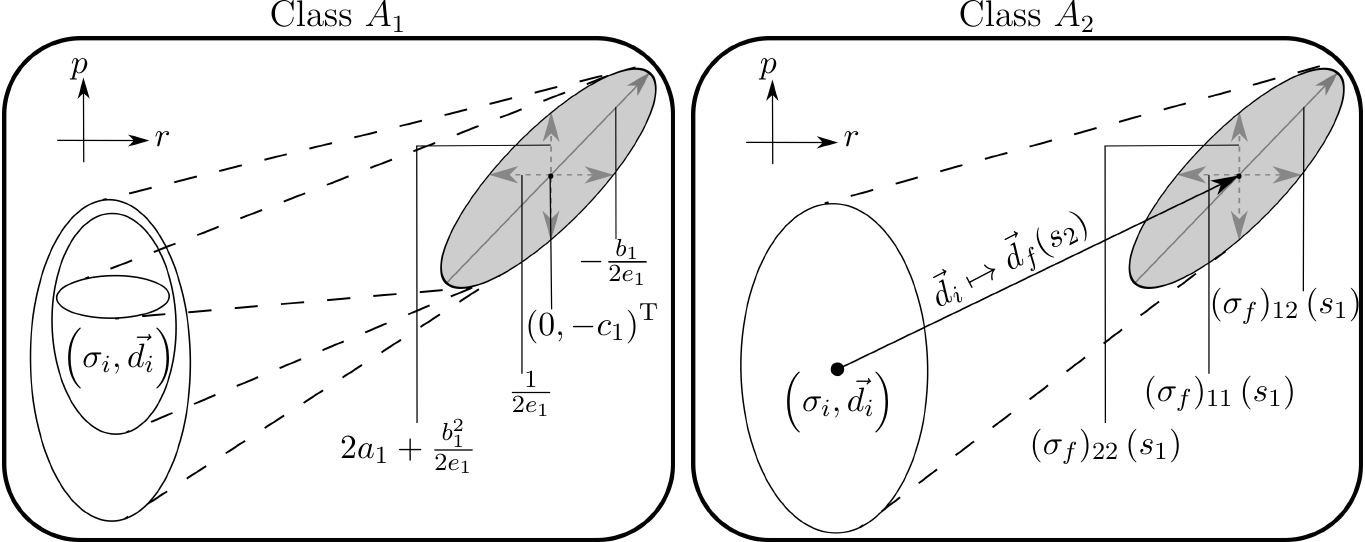}
\caption{
Schematic pictures of the channels belonging to classes $A_1$ and $A_2$ (right and left panels, respectively), acting on pictorial Wigner's functions of Gaussian states (represented with ellipses).
The coordinate system corresponds to the position variable $r$ and its conjugate momentum, $p$.
The figure shows how every channel in class $A_1$ maps every initial quantum state, in particular \gs{}s characterized by $\left(\sigma_i, \vec d_i \right)$, to a Gaussian state that depends only on the channel parameters. The values of the first and second moments of the final Gaussian state are indicated by a gray ellipse.
Similarly, for class $A_2$, we indicate the form of the final moments for initial Gaussian states. In this case they depend on two combinations of the initial parameters, $s_1$ and $s_2$, whose explicit formulas are given in the appendix, together with the form of the final moments.
\label{fig:1}
}
\end{figure} 

\section{Existence of master equations}  
\label{sec:master_equations}
In this section we show the conditions under which master equations, 
associated with the channels derived in~\sref{sec:ptp}, exist.
To be more precise, we prove that, under a simple condition,
the functional forms derived above
parametrize channels belonging to one-parameter differentiable families of
\gqc{}s. 
 The latter, together the result given in Ref.~\cite{PazSupplementary}, implies that channels with any functional form can be found in the vicinity of the identity map.
As a first step to derive the condition of existence of a differentiable family, we let the coefficients of forms presented in
eqs.~(\ref{eq:deltaop1}) and (\ref{eq:deltaop2}) to depend on time. Later we
derive the conditions under which they bring
any quantum state $\rho(x,r;t)$ to $\rho(x,r;t+\epsilon)$ (with $\epsilon>0$
and $t\in [0,\infty)$) smoothly, while holding the specific functional form of
the channel, \ie{} 
\begin{equation}
\rho(x,r;t+\epsilon)=\rho(x,r;t)+\epsilon \mcL_t
\left[\rho(x,r;t)\right]+\mcO(\epsilon^2),
\end{equation}
where both $\rho(x,r;t)$ and $\rho(x,r;t+\epsilon)$ are propagated from $t=0$ with channels either with the form $J_\text{I}$ or $J_\text{II}$, and $\mcL_t$ is a bounded superoperator in the state subspace.
This is basically the problem of \textit{the existence of a master equation}
\begin{equation}
\partial_t \rho(x,r;t)=\mathcal{L}_t\left[\rho(x,r;t)\right],
\label{eq:master_equation}
\end{equation} 
for such functional forms. Thus, the problem is reduced to prove the existence
of the linear generator $\mcL_t$, also known as \textit{Liouvillian}.

To do this we use an ansatz proposed in Ref.~\cite{Karrlein1997} to investigate the
existence and derive the master equation for \gf{}s,
\begin{eqnarray} \fl 
\qquad
\mathcal{L}=\mathcal{L}_c(t)+(\partial_x,\partial_r) 
\mathbf{X}(t) \left( \begin{array}{c} \partial_x \\ \partial_r \end{array} \right) 
+(x,r) 
\mathbf{Y}(t) 
\left( \begin{array}{c} \partial_x \\ \partial_r \end{array} \right)
+(x,r) 
\mathbf{Z}(t) \left( \begin{array}{c} x \\ r \end{array}
\right)
\label{eq:ansatz}
\end{eqnarray} 
where $\mathcal{L}_c(t)$ is a complex function and 
\begin{equation}
\mathbf{X}(t) =
\left( \begin{array}{cc} 
X_{xx}(t) & X_{xr}(t) \\
X_{rx}(t) & X_{rr}(t)
\end{array} \right)
\label{eq:defX}
\end{equation}
is a complex matrix as well as 
$\mathbf{Y}(t)$ and $\mathbf{Z}(t)$, whose entries are defined in a similar
way as in \eref{eq:defX}. Note that
$\mathbf{X}(t)$ and $\mathbf{Z}(t)$ can always be chosen symmetric, \ie{}
$X_{xr}=X_{rs}$ and $Z_{xr}=Z_{rx}$. Thus, we must determine $11$ time-dependent
functions from~\eref{eq:ansatz}. This ansatz is
also appropriate to study the functional forms introduced in this work, given
that the left hand side of \eref{eq:master_equation} only involves quadratic
polynomials in $x$,
$r$, $\partial/\partial x$ and $\partial/\partial r$, as in the \gf{} case.

Notice that singular channels do not admit a master equations since its
existence implies that channels with the functional form involved can be found
arbitrarily close from the identity channel. This is not possible for singular
channels due to the continuity of the determinant of the matrix $\mathbf{T}$.

For the non-singular cases presented in eqs.~(\ref{eq:deltaop1}) and (\ref{eq:deltaop2}), the condition for the
existence of a master equation is obtained as follows. (i) Substitute
the ansatz of \eref{eq:ansatz} in the right hand side of the
\eref{eq:master_equation}. (ii) Define $\rho(x,r;t)$ using \eref{eq:propagacion}, given an initial condition
$\rho(x,r;0)$, for each functional form $J_\text{I,II}$. (iii) Take
$\rho_f(x_f,r_f)\to\rho(x,r;t)$ and $\rho_i(x_i,r_i)\to \rho(x,r;0)$. Finally, (iv) compare both sides of
\eref{eq:master_equation}. Defining $A(t)=\alpha(t)/\beta(t)$ and
$B(t)=\gamma(t)/\eta(t)$, the conclusion is that for both $J_\text{I}$ and
$J_\text{II}$, a master equation exists if and only if $c_1(t) \dot A (t)=A(t)(\dot c_1(t)+A(t) c_2(t))$, this can be easily simplified, by adding $A(t)\dot A(t) c_2(t)$ in both sides the equation and integrating respect to $t$, to
\begin{equation}
c(t)\propto A(t),
\label{eq:master_exist}
\end{equation}
where $c(t)=
c_1(t)+A(t)c_2(t)$. Additionally, for the form $J_I$ the solutions for the
matrices $\mathbf{X}(t)$, $\mathbf{Y}(t)$ and $\mathbf{Z}(t)$ are given by 
\begin{eqnarray} 
X_{xx}=X_{xr}=Y_{rx}=Z_{rr}=0,\nonumber \\
Y_{xx}=\frac{\dot{A}}{A},\nonumber \\
\mathcal{L}_c=Y_{rr}=\frac{\dot{e}_1}{e_1}-\frac{\dot{e}_2}{e_2},\nonumber \\
X_{rr}=\frac{\dot{e}_1}{4e_1^2}-\frac{\dot{e}_2}{2e_1e_2},\nonumber \\
Y_{xr}=\Imi \left(\frac{\lambda_1\dot{e}_2}{e_1e_2}+\frac{\lambda_2\dot{A}}{e_2A}-\frac{\lambda_1\dot{e}_1}{2e_1^2}-\frac{\dot{\lambda}_2}{e_2}\right),\nonumber \\
Z_{xx}=\frac{\lambda_1^2}{2}\Bigg(\frac{\dot{e}_2}{e_1e_2}-\frac{\dot{e}_1}{2e_1^2}\Bigg)+\frac{\lambda_1}{e_2}\Bigg(\frac{\lambda_2\dot{A}}{A}-\dot{\lambda}_2\Bigg)+2\lambda_3\frac{\dot{A}}{A}-\dot{\lambda}_3, \nonumber \\
Z_{xr}=\Imi\left(\frac{\dot{A}}{A}\Bigg(\frac{e_1\lambda_2}{e_2}-\frac{\lambda_1}{2}\Bigg)+\frac{\dot{\lambda}_1}{2}-\frac{\dot{\lambda}_2e_1}{e_2}+\frac{\lambda_2}{2}\Bigg(\frac{\dot{e}_2}{e_2}-\frac{\dot{e}_1}{e_1}\Bigg)\right),
\label{eq:master_set1}
\end{eqnarray} 
where we have defined the following coefficients: $\lambda_1=b_1+Ab_3$, $\lambda_2=b_2+Ab_4$ and $\lambda_3=a_1+Aa_2+A^2a_3$.

For the form $J_\text{II}$ the solutions are the following
\begin{eqnarray} 
\mathcal{L}_c=X_{xx}=X_{xr}=X_{rr}=Z_{rr}=Y_{rx}=Y_{xr}=0, \nonumber\\
Y_{xx}=\frac{\dot{A}}{A}, \, Y_{rr}=\frac{\dot{B}}{B}. \nonumber \\ 
Z_{xx}=a_2(t) \dot A(t)+\frac{2 a_1(t) \dot A(t)}{A(t)}-A(t)^2 \nonumber\\-\dot a_3(t)-A(t) \dot{a}_2(t)-\dot a_1(t), \nonumber \\
Z_{xr}=\Imi\left(\frac{1}{2} \dot{\lambda}-\frac{\lambda}{2}\Bigg(\frac{\dot{A}}{A}+\frac{\dot{B}}{B}\Bigg)\right),
\label{eq:master_set2}
\end{eqnarray} 
where $\lambda=b_1+Ab_3+B(b_2+Ab_4)$.

Summarizing, letting the coefficients of forms $J_\text{I}$
and $J_\text{II}$ depend on time to
define one-parametric families, we have derived the condition for such families
to be smooth, see~\eref{eq:master_exist}.
Additionally we have found the explicit expressions for the ansatz coefficients
for both functional forms, see sets of equations (\ref{eq:master_set1}) and
(\ref{eq:master_set2}). In particular for $J_\text{II}$ the generator is
simpler than the one for $J_\text{I}$; this is expected because channels with
form $J_\text{II}$ depend on less parameters than $J_\text{I}$. Let us notice
that although singular channels do not admit master equations for the reasons
described above, it must not be confused with the fact that they can be reached
by smooth quantum processes.

\section{Conclusions} 
\label{sec:conclusions}
In this work we have critically reviewed the deceptively natural idea that
Gaussian quantum channels always admit a Gaussian functional form. To this end,
we went beyond the pioneering characterization of Gaussian channels with
Gaussian form presented in Ref.~\cite{PazSupplementary} in two new directions.
First we have shown that, starting from their most general definition as
mapping Gaussian states into Gaussian states, a more general parametrization of
the coordinate representation of the one-mode case exists, that admits
non-Gaussian functional forms. Second, we were able to provide a black-box
characterization of such new forms by imposing complete positivity (not
considered in Ref.~\cite{PazSupplementary}) and trace preserving
conditions. While our parametrization connects with the analysis done by
Holevo~\cite{Holevo2008} in the particular cases where besides having a
non-Gaussian form the channel is also singular, it also allows the study of
Gaussian unitaries, thus providing similar classification schemes. We completed
the classification of the studied types of channels by deriving the form of the
Liouvillian super operator that generates their time evolution in the form of a
master equation. Surprisingly, Gaussian quantum channels without Gaussian form
can be experimentally addressed by means of the celebrated Caldeira-Legget
model for the quantum damped harmonic oscillator~\cite{Gert}, where the new types of channels
described here naturally appear in the sub-ohmic regime.

\section*{Acknowledgements} 
We acknowledge PAEP and RedTC for financial support. Support by projects CONACyT 285754,
UNAM-PAPIIT IG100518 is acknowledged. CP
acknowledges support by PASPA program from DGAPA-UNAM. 
CAM and JDU acknowledge financial support from the German Academic Exchange
Service (DAAD). We are thankful to the University of Vienna where part of this
project was done. 
\appendix
\section{Explicit formulas for class $\mathbf{A_2}$} 
The explicit formulas of the final states for channels of class $A_2$ with the form
presented in \eref{eq:deltaop2} with $e_2=0$ are
\begin{eqnarray} 
\left(\sigma_f\right)_{11}&=\frac{1}{2 e_1},\nonumber\\
\left(\sigma_f\right)_{22} &= \left(\frac{\alpha}{\beta}\right)^2\left(\frac{b_3^2}{2 e_1}+2 a_3\right)+
\frac{\alpha}{\beta } \left(2 a_2+\frac{b_1 b_3}{e_1}\right)+\nonumber
& \   \ 2 a_1+\frac{b_1^2}{2 e_1}+s_1, \nonumber\\
\left(\sigma_f\right)_{12} &= -\frac{\alpha  }{\beta}\frac{b_3}{2 e_1}-\frac{b_1}{2 e_1},\nonumber\\
\vec d_f \left( s_3\right) &= \left(0, -\frac{\alpha}{\beta }c_2-c_1+s_2 \right)^\text{T},
\end{eqnarray} 
where
\begin{eqnarray} 
\fl
s_1=\left(b_2^2+2 \frac{\alpha}{\beta} b_2 b_4+\left(\frac{\alpha}{\beta}\right)^2b_4^2\right)\left(\sigma_i\right)_{11}\nonumber
-2\left(\frac{\alpha}{\beta}b_2+\left(\frac{\alpha}{\beta}\right)^2 b_4\right)\left(\sigma_i\right)_{12}
+ \left(\frac{\alpha}{\beta}\right)^2\left(\sigma_i\right)_{22},
\nonumber\\
s_2=\left(\frac{\alpha}{\beta} b_4+ b_2\right)(d_i)_1 -\frac{\alpha
}{\beta }(d_i)_2.
\end{eqnarray} 
The explicit formulas of the final states for channels of class $A_2$ with the
form presented in \eref{eq:deltaop2} with $\alpha=0$ are
\begin{eqnarray} 
\left(\sigma_f\right)_{11}&=\frac{e_2^2 }{4 e_1^2}\left(\sigma_i\right)_{11}+\frac{1}{2 e_1},\nonumber\\
\left(\sigma_f\right)_{12}&=\left(\frac{b_2 e_2}{2 e_1}-\frac{b_1 e_2^2 }{4 e_1^2}\right)\left(\sigma_i\right)_{11}-\frac{b_1}{2 e_1},\nonumber\\
\left(\sigma_f\right)_{22}&=2 a_1+\left(b_2-\frac{b_1 e_2}{2 e_1}\right)^2 \left(\sigma_i\right)_{11}+\frac{b_1^2}{2 e_1},
\end{eqnarray} 
and
\begin{equation}
\vec d_f=\left(\frac{e_2}{2 e_1} \left(\vec d_i \right)_1, \left(b_2-\frac{b_1 e_2}{2 e_1}\right)\left(\vec d_i \right)_1-c_1\right)^\text{T}.
\end{equation}
The explicit formulas of the final states for channels of class $A_2$ with Gaussian form
are
\begin{eqnarray} 
\left(\sigma_f\right)_{11}\left( s_1\right)&=\frac{2 a_3}{b_3^2}+s_1,\nonumber\\
\left(\sigma_f\right)_{12}\left(
s_1\right)&=\frac{a_2}{b_3}-\frac{2 a_3 b_1}{b_3^2}-b_1 s_1,\nonumber\\
\left(\sigma_f\right)_{22}\left( s_1
\right)&=\frac{b_1 \left(b_3 \left(b_1 b_3 s_1-2 a_2\right)+2 a_3 b_1\right)}{b_3^2}+2 a_1,\nonumber\\
\vec
d_f \left( s_2\right)&=\left(s_2-\frac{c_2}{b_3},b_1
\left(\frac{c_2}{b_3}-s_2\right)-c_1\right)^\text{T},
\end{eqnarray} 
where
\begin{eqnarray} 
s_1&=\frac{b_4^2}{b_3^2} \left(\sigma_i\right)_{11}-\frac{2 b_4 }{b_3^2}\left(\sigma_i\right)_{12}+\frac{1}{b_3^2}\left(\sigma_i\right)_{22},\nonumber\\
s_2&=\frac{1}{b_3}\left(d_i\right)_2-\frac{b_4
}{b_3}\left(d_i\right)_1.
\end{eqnarray} 

\bibliographystyle{unsrt}
\bibliography{labibliografia}
\end{document}